\documentclass{llncs}
\usepackage{amsmath,amssymb,amsthm}
\usepackage{url}
\usepackage{epsfig}
\usepackage[latin1]{inputenc}
\usepackage[mathscr]{euscript}
\usepackage{subfigure}

\input xy \xyoption{all}


\pagestyle{plain}

\begin{document}

\title{Differential Privacy versus Quantitative Information Flow\thanks{This work has been partially supported by the project ANR-09-BLAN-0169-01  PANDA  and by  the INRIA DRI  Equipe Associ\'ee PRINTEMPS.}
}
\author{ M\'ario S. Alvim$^1$ \and Konstantinos Chatzikokolakis$^2$ \and \\ Pierpaolo Degano$^3$ \and Catuscia~Palamidessi$^1$}
\institute{
$^1$ INRIA and LIX, Ecole Polytechnique, France.\\
$^2$ Technical University of Eindhoven, The Netherlands.\\
$^3$ Dipartimento di Informatica, Universit\`a di Pisa, Italy.
}

\maketitle

\begin{abstract}
Differential privacy is a notion of  privacy that has become very popular in the database community. Roughly, the idea is that a randomized query mechanism provides sufficient privacy protection if the ratio between the probabilities of two different entries to originate a certain answer is bound by $e^\epsilon$. In the fields of anonymity and information flow there is a similar concern for controlling information leakage, i.e. limiting the possibility of inferring the secret information from the observables. In recent years, researchers have proposed to quantify the leakage in terms of the information-theoretic notion of mutual information. There are two main approaches that fall in this category: One based on Shannon entropy, and one based on R\'enyi's min entropy. The latter has connection with the so-called Bayes risk, which expresses the probability of guessing the secret.

In this paper, we show how to model the query system in terms of an information-theoretic channel, and we compare the notion of differential privacy with that of mutual information. We show that the notion of differential privacy is strictly stronger, in the sense that it implies a bound on the mutual information, but not viceversa. 
\end{abstract}

\section{Introduction}
The growth of information technology raises significant concerns about the vulnerability of sensitive information. The possibility of collecting and storing data in large amount  and the availability of powerful data processing  techniques open the way to the threat of inferring private and secret information, to such an extent that fully justifies the users' worries. 

\subsection{Differential privacy}
The area of statistical databases has been, naturally, one of the first communities to consider the issues related to the protection of information. Already some decades ago, 
Dalenius  \cite{Dalenius:77:ST} proposed a famous ``ad omnia'' privacy desideratum: nothing about an individual should be learnable from the database that cannot be learned without access to the database.

Dalenius' property, however, is too strong to be useful in practice: it has been shown by Dwork  \cite{Dwork:06:ICALP} that no useful database can provide it.  
In replacement Dwork has proposed the notion of \emph{differential privacy}, which has had an extraordinary impact in the community. Intuitively, such notion is based on the idea that the presence or the absence of an item in the database should not  change in a significant way 
the probability of obtaining a certain answer for a given query \cite{Dwork:06:ICALP,Dwork:10:SODA,Dwork:10:CACM,Dwork:09:STOC}. 

In order to explain the concept more precisely, let us consider the typical scenario: we have databases whose entries are values (possibly tuples) taken from a given universe. 
A database can be queried by users which have honest purposes, but also by attackers trying to infer secret or private data.  
In order to control the leakage of secret information, the curator uses some randomized mechanism, which causes a certain lack of precision in the answers.
Clearly, there is a trade off between the need of obtaining answers as precise as possible for legitimate use, and the need to introduce some fuzziness to the purpose of confusing the attacker. 

Let $\mathcal{K}$ be the randomized function that provides the answers to the queries. We say that
$\mathcal{K}$ provides $\epsilon$-differential privacy if for all databases $D$ and $D'$, 
such that one is a subset of the other and the larger contains a single additional entry, 
and for all $S \subseteq \mathit{range}(\mathcal{K})$,  the ratio between the probability that the result of $\mathcal{K}(D)$ is in $S$, and 
the probability that the result of $\mathcal{K}(D')$ is in $S$, is at most $e^\epsilon$.

Dwork has also studied sufficient conditions for a randomized function $\mathcal{K}$ to implement a mechanism satisfying $\epsilon$-differential privacy.
It suffices to consider a Laplacian distribution with variance depending on $\epsilon$, and mean equal to the correct answer \cite{Dwork:10:CACM}. 
This is a technique quite diffused in practice. 

\subsection{Quantitative information flow and anonymity}
The problem of preventing the leakage of secret information has been a pressing concern 
also in the area of software systems, and has motivated a very active line of research  called \emph{secure information flow}. 
Similarly to the case of privacy, also in this field, at the beginning, the goal was ambitious:  to ensure \emph{non-interference}, 
which means complete lack of leakage. But, as for Dalenius' notion of privacy, no-intereference is too strong for being obtainable 
in practice, and the community has started exploring weaker notions. Some of the most popular approaches are the quantitative ones, 
based on information theory. See for instance \cite{Clark:01:QAPL,Clark:05:JLC,Clarkson:09:JCS,Kopf:07:CCS,Malacaria:07:POPL,Malacaria:08:PLAS,Smith:09:FOSSACS}. 

Independently the field of anonymity, which is concerned with the protection of the identity of agents performing certain tasks,  has evolved towards similar approaches.  
In the case of anonymity it is even more important to consider a quantitative  formulation, because \emph{anonymity protocols typically use randomization} to obfuscate the link 
between the \emph{culprit} (i.e. the agent which performs the task) and the observable effects of the task. 
The first notion of anonymity, due to Chaum \cite{Chaum:88:JC}, required that the observation would not change the probability of an individual to be the culprit. 
In other words,  the protocol should guarrantee that the observation does not increase the chances of learning the identity of the culprit. 
This is very similar to Dalenius' notion of privacy, and equally unattainable in practice (at least, in the majority of real situations).  
Also in this case, researchers in the area have started considering weaker notions based on information theory, see for instance \cite{Chatzikokolakis:08:IC,Moskowitz:03:WPES,Zhu:05:ICDCS}.

If we abstract from the kind of secrets and observables, anonymity and of information flow are similar problems: there is some information that we want to keep secret,  
there is a system that produces some kind of observable information  depending on the secret one, and we want to prevent as much as possible that an attacker may 
infer the secrets from the observables. 
It is therefore not surprising that the foundations of the two fields have converged towards the same information theoretical approaches. 
The majority of these approaches are based on the  idea of representing the system (or protocol) as an information-theoretic channel taking the secrets in input ($X$) and producing the 
observables in output ($Y$). The \emph{entropy} of $X$, $H(X)$, represents the converse of the \emph{a priori vulnerability}, i.e. the chance of the attacker to find out the secret. Similarly, the conditional entropy of $X$ given $Y$, $H(X\mid Y)$, represents the converse of the \emph{a posteriori vulnerability}, i.e. the chance of the attacker to find out the secret after having observed the output. 
The  \emph{mutual information} between $X$ and $Y$, $I(X;Y) = H(X)-H(X\mid Y)$,  represents the gain for the adversary provided by the observation, and is taken as definition 
of the \emph{information leakage} of the system. Sometimes we may want to abstract from the distribution of $X$, in which case we can use the \emph{capacity} 
of the channel, defined as the maximum of  $I(X;Y)$ over all possible  distributions on $X$. This represents the worst case for leakage.

The various approaches in literature differ, mainly, for the notion of entropy. Such notion is related to the kind of attackers we want to model, and to how we measure their success (see \cite{Kopf:07:CCS} for an illuminating discussion of such relation).  Shannon entropy~\cite{Shannon:48:Bell}, on which most of the approaches are based,  represents an adversary which tries to find out the secret $x$ by asking questions of the form ``does $x$ belong to set $S$?''. Shannon entropy is precisely the average number of questions necessary to find out the exact value of $x$ with an optimal strategy (i.e. an optimal choice of the $S$'s). 
The other most popular notion of entropy (in this area)  is R\'enyi's min entropy \cite{Renyi:61:Berkeley}. The corresponding notion of attack is a \emph{single try} of the form ``is $x$ equal to $v$?''. 
R\'enyi's min entropy is precisely the log of the probability of guessing the true value with the optimal strategy, which consists, of course, in selecting the $v$ with the highest probability. 
Approaches based on this notion include \cite{Smith:09:FOSSACS} and \cite{Braun:09:MFPS}. 

It is worth noting that, while  the R\'enyi's min entropy of $X$, $H_\infty(X)$, represents the a priori probability of success (of the single-try attack), the R\'enyi's min conditional entropy of $X$ given $Y$, $H_\infty(X\mid Y)$, represents the a posteriori probability of success\footnote{We should mention that R\'enyi did not define the conditional version of the min entropy, and that there have been various different proposals in literature for this notion. We use here the one proposed by Smith in \cite{Smith:09:FOSSACS}.}. This a posteriori probability  is the converse of the Bayes risk~\cite{Cover:06:BOOK} , which has also been used as a measure of leakage \cite{Braun:08:FOSSACS,Chatzikokolakis:08:JCS}. 

\subsection{Goal of the paper}
From a mathematical point of view, privacy presents many similarities with information flow and anonymity. The private data of the entry constitute the secret, the answer to the query gives the observation, and the goal is to prevent as much as possible the inference of the secret from the observable. Differential privacy can be seen as a quantitative definition of the degree of leakage. 
The main goal of this paper is to explore the relation with the alternative definitions based on information theory, with the purpose of getting a better understanding of the notion of differential privacy, of the specific problems related to privacy, and of the models of attack used to formalize the notion of privacy, in relation to those used for anonymity and information flow. 

\subsection{Contribution}
The contribution of this paper is as follows:
\begin{itemize}
\item We show how the problem of privacy can be formulated in an information-theoretic setting. More precisely, we show how the  answer function $\mathcal{K}$ can be associated to an information-theoretic channel. 

\item We prove that  $\epsilon$-differential privacy implies a bound on the Shannon mutual information of the channel, and that this bound approach $0$ as $\epsilon$ approaches $0$. 
Same for R\'enyi min mutual information. 

\item We show that the viceversa of the above point does not hold, i.e. that Shannon and R\'enyi min mutual information  (and also the corresponding capacities)  can approach $0$ while the $\epsilon$ parameter of differential privacy approaches infinity. 


\end{itemize}

\subsection{Plan of the paper}
Next section introduces some necessary background notions. Section 3 proposes an information-theoretic view of the database query systems. 
Section 4 show the main results of the paper, namely that differential privacy implies a bound on Shannon and R\'enyi min mutual information, but not viceversa. 
Section 5 concludes and presents some ideas for future work.

The proofs of the results are in the appendix. Such appendix will not be included in the proceeding version (for reasons of space), but the proofs will be made available on line.
\section{Preliminaries}

\subsection{Differential privacy}

We assume a fixed finite universe $U$ in which the entries of databases may range.
The concept of differential privacy is tightly connected to the concept of \emph{adjacent} (or \emph{neighbor}) databases. 

\begin{definition}[\cite{Dwork:10:CACM}]
	\label{def:adjacent-db}
	A pair of databases $(D', D'')$ is considered \emph{adjacent} (or \emph{neighbors}) if one is a proper subset of the other and the larger database contains just one additional entry.
\end{definition}  

Dwork's definition of differential privacy is the following:

\begin{definition}[\cite{Dwork:06:ICALP}]
	\label{def:diff-privacy-1}
	A randomized function $\mathcal{K}$ satisfies \emph{$\epsilon$-differential privacy} if for all pairs of adjacent databases $D'$ and $D''$, and all $S \subseteq Range(\mathcal{K})$,
	\begin{equation}
		Pr[\mathcal{K}(D') \in S] \leq e^{\epsilon} \times Pr[\mathcal{K}(D'') \in S]		
	\end{equation}	
\end{definition}

\subsection{Information theory and interpretation in terms of attacks}

In the following, $X, Y$ denote two discrete random variables with carriers ${\cal X} = \{x_{1}{}, \ldots, x_{n}{}\},
\ 	{\cal Y} = \{ y_{1}{}, \ldots, y_{m}{} \}$, and
 probability distributions $p_{X}(\cdot)$,  $p_{Y}(\cdot)$,
 respectively. 
 An information-theoretic channel is constituted by an input $X$, an output $Y$, and the matrix of conditional probabilities
 $p_{Y\mid X}(\cdot \mid \cdot)$, where $p_{Y\mid X}(y \mid x)$ represent the probability that $Y$ is $y$ given that $X$ is $x$. 
We will use  $X\wedge Y$ to represent the random variable with carrier ${\cal X}\times{\cal Y}$ and joint probability distribution
 $p_{X \wedge Y}(x,y) = p_{X}(x) \cdot p_{Y\mid X}(y\mid x)$.
  We shall omit the subscripts on the probabilities when they are clear from the context.

 \subsection{Shannon entropy}
	
	The Shannon entropy of  $X$  is defined as
	\[
	H(X) = -\sum_{x\,  \in \, {\cal X}} p(x)
	\log \,p(x)
	\] 
	The minimum
	value $H(X) = 0$ is obtained when $p(\cdot)$ is concentrated on a single value (i.e. when $p(\cdot)$ is a delta of Dirac).
	The maximum value $H(X) = \log {|{\cal X}|}$ is obtained
	when $p(\cdot)$ is the
	uniform distribution. Usually the base of the  logarithm is set to  be $2$ and, correspondingly,
	the entropy is measured in \emph{bits}.
	
	The \emph{conditional entropy} of $X$ given $Y$
	 is 
	 \[
	 H(X\mid Y)  = 
	 	{\displaystyle\sum_{y\,\in\, {\cal Y}}
		p(y) \  H(X \mid Y = y )}
	\]
	where
	\[
	 H(X \mid Y = y )  = {\displaystyle -\sum_{x\,\in\,{\cal X}}
		p(x\mid y)
		\log \, p(x\mid y)  }
	\]
	We can prove
	that $0 \leq H(X \mid Y) \leq H(X)$. The minimum value, $0$,  is obtained when $X$ is completely
	determined by $Y$. The maximum value $H(X)$ is obtained when
	$Y$ reveals no information about $X$, i.e. when $X$ and $Y$ are independent.
	
	The \emph{mutual information} between $X$ and $Y$ is defined
	as 
	\begin{equation}\label{eqn:ShannonMutualInfo}
	I(X;Y) \ = \ H(X) - H(X \mid Y)
	\end{equation}
	and it measures
	the amount of  information
	about $X$ that we gain by observing $Y$. It can be shown that
	$I(X;Y) = I(Y;X)$ and $0 \leq I(X;Y) \leq H(X)$.
	
	Shannon capacity is defined as the maximum mutual information over all possible input distributions:
	\[
	C = \max_{p_{X}(\cdot)}I(X;Y)
	\]

 \subsection{R\'enyi min-entropy}\label{sec:minEntropy}
	
	In \cite{Renyi:61:Berkeley}, R\'enyi introduced an one-parameter family of entropy measures, 
	intended as a generalization of Shannon entropy.
	The R\'enyi entropy of order $\alpha$ ($\alpha > 0$,  $\alpha \neq 1$) of  a random variable $X$ is defined as
	\[
	H_\alpha(X) \ =\  \frac{1}{1-\alpha}\log\sum_{x \,\in\,{\cal X}} p(x)^\alpha
	\]
	We are particularly interested in the limit of $H_\alpha$ as $\alpha$ approaches $\infty$. This is  called \emph{min-entropy}. It can be proven  that
	\[
	H_\infty(X) \ \stackrel{\rm def}{=}\ \lim_{\alpha\rightarrow \infty}H_\alpha(X) \ =\  - \log\,\max_{ x\in{\cal X}}\,p(x)	
	\]
	
	R\'enyi defined also  the $\alpha$-generalization of other information-theoretic notions, like 
	 the Kullback-Leibler divergence. However, he did not define the $\alpha$-generalization of the  conditional entropy, and there is no agreement on what it should be. 
	For the case $\alpha = \infty$, we adopt here the definition of conditional entropy proposed by Smith in \cite{Smith:09:FOSSACS}:
	\begin{equation}\label{eqn:SmithCondEntropyInfty}
	 \begin{array}{lcl}
	H_\infty(X\mid Y) \ = \  - \log \sum_{y\in {\cal Y}} p(y)\max_{x\in {\cal X}} \ p(x \mid y)
	\end{array}
	\end{equation}
Analogously  to (\ref{eqn:ShannonMutualInfo}), we can define the mutual information 
 $I _\infty$ as $H_\infty(X) - H_\infty(X\mid Y)$, and the capacity $C_\infty$ as  $\max_{p_{X}(\cdot)}I_\infty(X;Y)$. 
It has been proven in~\cite{Braun:09:MFPS} that $C_\infty$ is obtained at the uniform distribution, and that it is equal to the sum of the maxima of each column in the channel matrix:
\[
C_\infty = \sum_{y \,\in\,{\cal Y}} \max_{x \,\in\,{\cal X}} p(y\mid x).
\]	

%

\section{An information theoretic model of privacy}
In this section we show how to represent a database query system (of the kind considered in differential privacy) in terms of an information-theoretic channel.


According to~\cite{Dwork:06:ICALP} and~\cite{Dwork:10:CACM}, differential privacy can be implemented by adding some  appropriately chosen random noise to the answer  $x=f(D)$, 
where $f$ is the \emph{query function} and $D$ is the database. The function can operate in the entire database at once, and even though the query may be composed by a chain of sub-queries, we assume that subsequent sub-queries depend only on the \emph{true answer} to previous sub-queries. Under this constraint, no matter how complex the query is, it is still a function $f$ of the database $D$. The scenario where subsequent sub-queries can depend on the \emph{reported answer} to previous queries corresponds to adaptive adversaries~\cite{Dwork:06:ICALP}, and is not considered in this paper.
	
	After the true answer $x$ to the query is obtained from $D$, some noise is introduced in order to produce a reported answer $y$. The reported answer can be seen as a random variable $Y$ dependent on the random variable $X$ corresponding to the real answer, and the two random variables are related by a conditional probability distribution $p_{Y|X}(\cdot|\cdot)$. The conditional probabilities
	$p_{Y|X}(y|x)$ constitute the matrix of an information theoretic channel from $X$ to $Y$.
	
	Figure~\ref{fig:query-flow-complete} shows the scheme of implementation of a differential privacy scheme.
 
	\begin{figure}[htb]%
		\centering
		\includegraphics[]{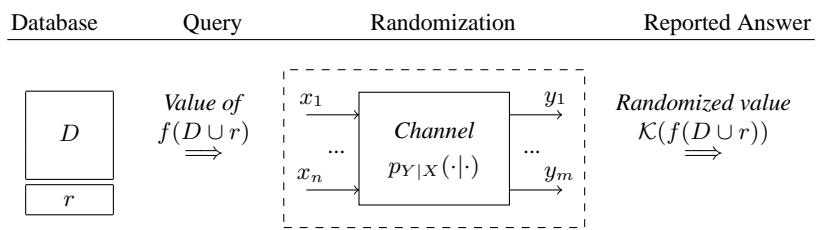}%
		\caption{The channel corresponding to a differential privacy scheme.}%
		\label{fig:query-flow-complete}%
	\end{figure}
	
	In \cite{Dwork:06:ICALP} it has been proved that a way to define the values of $p_{Y|X}(\cdot|\cdot)$ so to ensure $\epsilon$-differential privacy, is by using the Laplace distribution:
	
	\begin{equation}
		P((Y=y)|(X=x),\Delta f/\epsilon) = \frac{\Delta f}{2\epsilon} e^{-| y-x | \epsilon/ \Delta f}
	\end{equation}
	
	where $\Delta f$ is the L1-sensitivity of $f$, defined as\footnote{We give here the definition for the case in which the range of $f$ is $\mathbb{R}$. In the more general case in which the range is $\mathbb{R}^n$ we should replace   $| f(D') - f(D'') |$ by the $1$-norm of the vector $f(D') - f(D'')$.}
	\[\Delta f = \max_{D',D'' \,\mathit{adjacent}} | f(D') - f(D'') |.\]
	

\section{Relation between differential privacy and mutual information}

In this section we investigate the relation between differential privacy and information-theoretic notions.  We start by considering  an equivalent  definition of differential privacy, easier to handle for our purposes.

\subsection{Testing single elements}

Definition~\ref{def:diff-privacy-1} considers  tests which check whether the result of $\mathcal{K}(D)$ belongs to a certain set or not. 
We prefer to simplify this definition by considering only tests over single elements:

\begin{definition}
	\label{def:diff-privacy-2}
	A randomized function $\mathcal{K}$ gives \emph{$\delta$-differential privacy} if for all pairs adjacent datasets $D'$ and $D''$, and all $k \in Range(\mathcal{K})$,
	\begin{equation}
		Pr[\mathcal{K}(D') = k] \leq e^{\delta} \times Pr[\mathcal{K}(D'') =k]
	\end{equation}	
\end{definition}

The following result shows that our definition of differential privacy is equivalent to the classical one.
\begin{theorem}\label{theo:equivalent}
	A function $\mathcal{K}$ gives $\epsilon$-differential privacy iff it gives $\delta$-differential privacy, with $\epsilon = \delta$.
\end{theorem}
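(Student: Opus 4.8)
The plan is to prove the two implications separately, exploiting the fact that the single-element condition of Definition~\ref{def:diff-privacy-2} is exactly the instance of Definition~\ref{def:diff-privacy-1} obtained by restricting the test set $S$ to singletons. Since the claimed constants match ($\epsilon = \delta$), no rescaling is needed and the whole argument reduces to one easy direction and one short computation.

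For the direction from $\epsilon$-differential privacy to $\delta$-differential privacy, I would simply observe that every singleton $\{k\}$ with $k \in Range(\mathcal{K})$ is a particular choice of $S \subseteq Range(\mathcal{K})$. Instantiating the inequality of Definition~\ref{def:diff-privacy-1} with $S = \{k\}$ yields $Pr[\mathcal{K}(D') = k] \leq e^{\epsilon} \times Pr[\mathcal{K}(D'') = k]$ for all adjacent $D', D''$, which is precisely the condition of Definition~\ref{def:diff-privacy-2} with $\delta = \epsilon$. This direction is immediate and requires no computation.

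For the converse, from $\delta$-differential privacy to $\epsilon$-differential privacy, I would use the additivity of probability over the (discrete) range of $\mathcal{K}$. Fixing adjacent databases $D', D''$ and an arbitrary $S \subseteq Range(\mathcal{K})$, I can decompose the probability of landing in $S$ as a sum over its elements, apply the single-element bound of Definition~\ref{def:diff-privacy-2} term by term, and factor out the common constant $e^{\delta}$:
\[
Pr[\mathcal{K}(D') \in S] = \sum_{k \in S} Pr[\mathcal{K}(D') = k] \leq \sum_{k \in S} e^{\delta}\, Pr[\mathcal{K}(D'') = k] = e^{\delta}\, Pr[\mathcal{K}(D'') \in S].
\]
This reestablishes Definition~\ref{def:diff-privacy-1} with $\epsilon = \delta$, closing the equivalence.

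The only point that demands care is the discreteness of $Range(\mathcal{K})$, which is what legitimizes rewriting a set probability as a sum of singleton probabilities; this is exactly the setting adopted throughout the paper, where the channel output is a discrete random variable $Y$. In a continuous setting the singleton probabilities would vanish and both definitions would instead have to be phrased in terms of densities, replacing the sum by an integral. Beyond this observation the argument is entirely routine, and in fact no genuine obstacle arises: the constants match precisely because summing over $S$ leaves the multiplicative factor $e^{\delta}$ untouched.
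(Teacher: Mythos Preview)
Your proof is correct and follows essentially the same approach as the paper: specialize $S$ to a singleton for the forward direction, and for the converse decompose $Pr[\mathcal{K}(D')\in S]$ as a sum over $k\in S$, apply the single-element bound termwise, and factor out $e^{\delta}$. The additional remark on discreteness is a welcome clarification but does not alter the argument.
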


\subsection{Databases with the same number of entries and differing in at most one entry}

Consider two databases $D'$ and $D''$ that have the same number of entries and differ in at most one entry as in Figure~\ref{fig:databases}. Let $D$ be the common part shared by both databases, and let  $r'$ and $r''$ be the rows in which they differ, namely $D' = D \cup \{r'\}$ and $D'' = D \cup \{r''\}$.
	
	\begin{figure}[!htb]%
		\centering
		\includegraphics{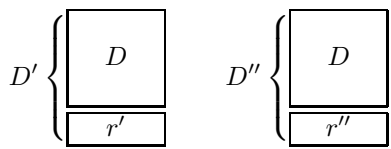}%
		\caption{Two databases differing in exactly one entry}%
		\label{fig:databases}%
	\end{figure}
	
We prove that  $\delta$-differential privacy imposes also 
a bound on the comparison between databases with the same number of entries, and which differ in the values of only one entry. 
\begin{lemma}\label{differ}
	Let $\mathcal{K}$ be a function that gives $\delta$-differential privacy for all pairs of adjacent databases. Given two databases $D'$ and $D''$ that have the same number of entries and differ in the value of at most one entry, then:
	$$ Pr[\mathcal{K}(D') = k] \leq e^{2\delta} \times Pr[\mathcal{K}(D'') = k] $$
\end{lemma}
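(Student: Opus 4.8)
The plan is to route the comparison between $D'$ and $D''$ through their common sub-database $D$. The crucial structural observation is that $D'$ and $D''$ are \emph{not} themselves adjacent in the sense of Definition~\ref{def:adjacent-db}: they have the same number of entries, so neither is a proper subset of the other. However, with $D' = D \cup \{r'\}$ and $D'' = D \cup \{r''\}$ as in the setup, the database $D$ is a proper subset of each of $D'$ and $D''$, and each of them contains exactly one additional entry relative to $D$. Hence both pairs $(D', D)$ and $(D, D'')$ are adjacent, and to each of them the hypothesis of $\delta$-differential privacy (Definition~\ref{def:diff-privacy-2}) applies directly.

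First I would apply $\delta$-differential privacy to the adjacent pair $(D', D)$, obtaining for every $k \in \mathit{Range}(\mathcal{K})$ the bound $Pr[\mathcal{K}(D') = k] \leq e^{\delta} \times Pr[\mathcal{K}(D) = k]$. Next I would apply it to the adjacent pair $(D, D'')$; here I need the inequality oriented so that $Pr[\mathcal{K}(D) = k] \leq e^{\delta} \times Pr[\mathcal{K}(D'') = k]$. This is legitimate because adjacency is a symmetric relation on pairs of databases, so Definition~\ref{def:diff-privacy-2} quantifies over the unordered adjacent pair $\{D, D''\}$ and therefore yields the multiplicative bound in both directions. Chaining the two inequalities gives
\[
Pr[\mathcal{K}(D') = k] \leq e^{\delta} \times Pr[\mathcal{K}(D) = k] \leq e^{\delta} \times e^{\delta} \times Pr[\mathcal{K}(D'') = k] = e^{2\delta} \times Pr[\mathcal{K}(D'') = k],
\]
which is exactly the claimed bound. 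The case where $D'$ and $D''$ differ in \emph{zero} entries (so $D' = D''$) is trivial, since then the two probabilities coincide and $e^{2\delta} \geq 1$.

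There is no real technical obstacle here; the argument is essentially a two-step ``triangle'' inequality in the multiplicative metric induced by differential privacy. The only points that require care, rather than calculation, are the two conceptual observations above: recognising that same-size neighbors are not directly adjacent and so must be connected via the shared core $D$, and recognising that the symmetry of the adjacency relation lets me use the privacy bound in the needed orientation on the second pair. Everything else is immediate, and the factor $e^{2\delta}$ arises naturally as the product of the two single-step factors $e^{\delta}$.
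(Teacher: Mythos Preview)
Your proof is correct and follows exactly the same route as the paper's: both pass through the common sub-database $D$, apply $\delta$-differential privacy once to the adjacent pair $(D',D)$ and once to $(D,D'')$, and multiply the two $e^{\delta}$ factors. Your additional remarks about the non-adjacency of $D'$ and $D''$, the symmetry of adjacency, and the degenerate equal-database case are sound clarifications but do not change the argument.
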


\subsection{Shannon mutual information}\label{sec:Shannon}
We prove now that $\delta$-differential privacy imposes a bound on Shannon  mutual information, and that this bound approaches $0$  as the parameter $\delta$ approaches 
$0$.
\begin{theorem}\label{theo:Shannon}
	If a randomized function $\mathcal{K}$ gives $\delta$-differential privacy according to Definition~\ref{def:diff-privacy-2}, then for every result $x^*$ of the function $f$ the Shannon mutual information between the true answers $X$ (i.e. the results of $f$) and the reported answers $Y$  (i.e. the results of $\mathcal{K}$) is bounded by:
	$$I(X;Y) \leq (e^{2\delta} + e^{-2\delta})\delta \log(e) + (e^{2\delta} - e^{-2\delta})\sum_{y}p(y|x^*) \log(p(y|x^*))$$
\end{theorem}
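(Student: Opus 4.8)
The plan is to translate the privacy guarantee into pointwise bounds on the channel matrix, and then to bound the mutual information by comparing every column of the matrix to the fixed column indexed by $x^*$. First I would invoke Lemma~\ref{differ}: since any two true answers arise from databases that differ in at most one entry, for every input $x$ and every output $y$ we obtain the two-sided estimate $e^{-2\delta}\,p(y\mid x^*)\le p(y\mid x)\le e^{2\delta}\,p(y\mid x^*)$. Because the output marginal $p(y)=\sum_{x}p(x)\,p(y\mid x)$ is a convex combination of the columns, the same sandwich holds for it, namely $e^{-2\delta}\,p(y\mid x^*)\le p(y)\le e^{2\delta}\,p(y\mid x^*)$. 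I would then write $I(X;Y)=H(Y)-H(Y\mid X)$ and bound the two entropies separately against the reference quantity $H(Y\mid X=x^*)=-\sum_y p(y\mid x^*)\log p(y\mid x^*)$.

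For the comparison it is convenient to factor each relevant probability as a multiple of $q=p(y\mid x^*)$, writing $p(y\mid x)=s\,q$ and $p(y)=t\,q$ with $s,t\in[e^{-2\delta},e^{2\delta}]$, and to use the identity $-r\log r = s\,(-q\log q) + q\,(-s\log s)$ for $r=sq$. Summed over $y$, the first group of terms reproduces a scalar multiple of $H(Y\mid X=x^*)$ (since $-q\log q\ge 0$ and $s$ lies between $e^{-2\delta}$ and $e^{2\delta}$), while the second group, governed by $g(s)=-s\log s$ on $[e^{-2\delta},e^{2\delta}]$, produces the $\delta\log e$ contribution. Bounding $H(Y)$ from above uses the upper factor $e^{2\delta}$ on the entropy-like part and the extremal value of $g$ on the noise part, whereas bounding $H(Y\mid X)=\sum_x p(x)H(Y\mid X=x)$ from below uses the lower factor $e^{-2\delta}$ together with the opposite extremum of $g$. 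Subtracting, the entropy-like parts combine into the coefficient $e^{2\delta}-e^{-2\delta}$ multiplying $\sum_y p(y\mid x^*)\log p(y\mid x^*)$, and the noise parts combine into the coefficient $e^{2\delta}+e^{-2\delta}$ multiplying $\delta\log e$.

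The main obstacle is pinning down the exact constants $(e^{2\delta}+e^{-2\delta})$ and $(e^{2\delta}-e^{-2\delta})$ rather than cruder ones. This requires a careful extremal analysis of $g(s)=-s\log s$ on $[e^{-2\delta},e^{2\delta}]$: in particular, for the relevant range of $\delta$ the point $1/e$ lies outside the interval, so $g$ is monotone there and its extrema sit at the endpoints $e^{\pm2\delta}$, where $g$ evaluates to $\mp 2\delta\,e^{\pm2\delta}\log e$. One must also exploit the normalisation $\sum_y p(y)=\sum_y p(y\mid x^*)=1$ — equivalently that the multipliers $s,t$ average to $1$ under the weights $p(y\mid x^*)$ — which prevents all multipliers from sitting simultaneously at an extreme and thereby tightens the noise coefficient. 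I would keep careful track of the signs throughout, using $\log p(y\mid x^*)\le 0$ (so that $H(Y\mid X=x^*)\ge 0$) and $I(X;Y)\ge 0$ as consistency checks; once the constants are in place, that the bound tends to $0$ as $\delta\to 0$ follows immediately, since $e^{2\delta}\pm e^{-2\delta}$ stays bounded while $\delta\log e\to 0$.
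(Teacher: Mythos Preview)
Your overall architecture is exactly the paper's: write $I(X;Y)=H(Y)-H(Y\mid X)$, invoke Lemma~\ref{differ} to sandwich every $p(y\mid x)$ (and hence $p(y)$) between $e^{-2\delta}p(y\mid x^*)$ and $e^{2\delta}p(y\mid x^*)$, and then bound the two entropies separately against the reference column $p(\cdot\mid x^*)$.

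The gap is in \emph{which endpoint goes where}. You propose to bound $H(Y)$ from above using the \emph{upper} factor $e^{2\delta}$ on the entropy-like part and $H(Y\mid X)$ from below using the \emph{lower} factor $e^{-2\delta}$. Carried through with your identity $-(sq)\log(sq)=s\,(-q\log q)+q\,g(s)$, this yields
\[
H(Y)\le e^{2\delta}H(Y\mid X{=}x^*)+\cdots,\qquad
H(Y\mid X)\ge e^{-2\delta}H(Y\mid X{=}x^*)+\cdots,
\]
so the difference carries the coefficient $(e^{2\delta}-e^{-2\delta})$ in front of $H(Y\mid X{=}x^*)=-\sum_y p(y\mid x^*)\log p(y\mid x^*)$, i.e.\ the coefficient $-(e^{2\delta}-e^{-2\delta})$ in front of $\sum_y p(y\mid x^*)\log p(y\mid x^*)$. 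That is the \emph{opposite} sign from the bound you are asked to prove, and neither the extremal analysis of $g(s)=-s\log s$ nor the normalisation constraint $\sum_y p(y)=1$ can flip it (the latter, via Jensen for the concave $g$, only improves the noise term, not the sign of the entropy-like term).

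The paper obtains the stated sign by doing the \emph{reverse} substitution: in the $H(Y)$ chain it replaces every $p(y\mid x)$ wholesale by the \emph{smaller} value $e^{-2\delta}p(y\mid x^*)$ inside $-r\log r$, and in the $H(Y\mid X)$ chain by the \emph{larger} value $e^{2\delta}p(y\mid x^*)$; this is what gives $H(Y)\le e^{-2\delta}H(Y\mid X{=}x^*)+\cdots$ and $H(Y\mid X)\ge e^{2\delta}H(Y\mid X{=}x^*)+\cdots$, hence the coefficient $+(e^{2\delta}-e^{-2\delta})$ on $\sum_y p(y\mid x^*)\log p(y\mid x^*)$. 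In other words, the paper treats $r\mapsto -r\log r$ as monotone decreasing on the relevant range and substitutes directly, rather than going through your two-term decomposition. Your plan, as written, therefore reproduces the skeleton of the paper's argument but not its exact constants; to match the paper you would have to substitute in the opposite direction from the one you describe.
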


It is easy to see that the expression which bounds $I$ from above,  $(e^{2\delta} + e^{-2\delta})\delta \log(e) + (e^{2\delta} - e^{-2\delta})\sum_{y}p(y|x^*) \log(p(y|x^*))$,  converges to $0$ when $\delta$ approaches $0$.

The converse of Theorem~\ref{theo:Shannon} does not hold. One reason is that mutual information is sensitive to the values of the input distribution, while differential privacy is not. Next example illustrates this point. 

\begin{example}\label{exa:sensitiveness}
Let $n$ be the number of elements of the universe, and $m$ the cardinality of the set of possible answers of $f$. 
Assume that $p(x_1) = {\alpha}$ and $p(x_i) = \frac{1-\alpha}{n-1}$ for $2\leq i\leq n$. Let $p(y_1\mid x_1) = \beta$, $p(y_j\mid x_1) = \frac{1-\beta}{m-1}$ for $2\leq j \leq m$, and  $p(y_j\mid x_i) = \frac{1}{m}$. otherwise. This channel is represented in Figure~\ref{tab:sub1}. It is easy to see that the Shannon mutual information approaches $0$ as $\alpha$ approaches $0$, independently of the value of $\beta$. 
Differential privacy, however, depends only on the value of $\beta$, more precisely, the parameter of differential privacy is $\max\{\log_e\frac{1}{m\beta},\log_e{m\beta},\log_e\frac{m-1}{m(1-\beta)},\log_e\frac{m(1-\beta)}{m-1}\}$, and it is easy to see that such parameter is unbound and goes to infinity as $\beta$ approaches $0$.
\end{example}

The reasoning in the counterexample above is not valid anymore if we consider capacity instead than mutual information. However, there is another reason why the converse of Theorem~\ref{theo:Shannon} does not hold, and this remains the case also if we consider capacity. The situation is illustrated by the following example.  

\begin{example}\label{exa:beta}
Let $n$ be the number of elements of the universe, and $m$ the cardinality of the set of possible answers of $f$. 
Assume that $p(y_i\mid x_i) = {\beta}$ and $p(y_i\mid x_j) = \frac{1-\beta}{m-1}$ for $i \neq j$.  This channel is represented in Figure~\ref{tab:sub2}.
It is easy to see that the Shannon capacity is $C= \log m - (1-\beta)\log (m-1) +\beta\log\beta+(1-\beta)\log(1-\beta)$, and that $C$ approaches $0$ as $\beta$ approaches $0$ and $m$ becomes large. Differential privacy, however, goes in the other direction when $\beta$ approaches $0$, and it is not very sensitive to the value of $m$. More precisely, the parameter of differential privacy is $\max\{\log_e\frac{1-\beta}{\beta(1-m)},\frac{\beta(1-m)}{1-\beta}\}$, and it is easy to see that such parameter is unbound and goes to infinity as $\beta$ approaches $0$, independently of the value of $m$.
\end{example}

\begin{table}[!htb]
		\centering
		\subtable[Example~\ref{exa:sensitiveness}]{
			 $\begin{array}{r}
				p_X(\cdot) \\ 
				 \\ [-2.5ex]
				\alpha \\ 
				 \\ [-2.5ex]
				 \\ [-2.5ex]
				\frac{1-\alpha}{m-1} \\ 
				 \\ [-2.5ex]
				 \\ [-2.5ex]
				\vdots \\ 
				 \\ [-2.5ex]
				 \\ [-2.5ex]
				\frac{1-\alpha}{m-1} \\ 
			\end{array}$
			$\begin{array}{|c||c|c|c|c|}
				\hline
				 & y_1 & y_2 & \ldots & y_m \\ \hline
				\hline
				& & & & \\ [-2.5ex]
				x_1 & \beta & \frac{1-\beta}{m-1} & \ldots & \frac{1-\beta}{m-1} \\ 
				& & & & \\ [-2.5ex]
				\hline						
				& & & & \\ [-2.5ex]
				x_2 & \frac{1}{m} & \frac{1}{m} & \ldots & \frac{1}{m} \\ 
				& & & & \\ [-2.5ex]
				\hline
				& & & & \\ [-2.5ex]
				\vdots & \vdots & \vdots & \ddots & \vdots \\ 
				& & & & \\ [-2.5ex]
				\hline
				& & & & \\ [-2.5ex]
				x_n & \frac{1}{m} & \frac{1}{m} & \ldots & \frac{1}{m} \\ 
				\hline
			\end{array}$
			\label{tab:sub1}
		} \quad \quad \quad
		\subtable[Example~\ref{exa:beta}]{
			$\begin{array}{|c||c|c|c|c|}
			 	\hline
			 	& y_1 & y_2 & \ldots & y_m \\ \hline
				\hline
				& & & & \\ [-2.5ex]
				x_1 & \beta & \frac{1-\beta}{m-1} & \ldots & \frac{1-\beta}{m-1} \\ 
				& & & & \\ [-2.5ex]
				\hline						
				& & & & \\ [-2.5ex]
				x_2 & \frac{1-\beta}{m-1} & \beta & \ldots & \frac{1-\beta}{m-1} \\ 
				& & & & \\ [-2.5ex]
				\hline						
				& & & & \\ [-2.5ex]
				\vdots & \vdots & \vdots & \ddots & \vdots \\ 
				& & & & \\ [-2.5ex]
				\hline
				& & & & \\ [-2.5ex]
				x_n & \frac{1-\beta}{m-1} & \frac{1-\beta}{m-1} & \ldots & \beta \\ 
				\hline						
			\end{array}$
			\label{tab:sub2}
		}
		\label{tab:examples}
		\caption{The channels of Examples~\ref{exa:sensitiveness} and~\ref{exa:beta}}
	\end{table}

\subsection{R\'enyi min mutual information}
We show now that a result analogous to that of Section~\ref{sec:Shannon} holds also in the case of R\'enyi min entropy.
\begin{theorem}\label{theo:Renyi}
	If a randomized function $\mathcal{K}$ gives $\delta$-differential privacy according to Definition~\ref{def:diff-privacy-2}, then the R\'enyi min mutual information between the true answer of the function $X$ and the reported answer $Y$ is bounded by
	$$I_{\infty}(X;Y) \leq 2\delta \log e.$$
\end{theorem}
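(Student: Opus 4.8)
The plan is to rewrite $I_\infty$ as a single logarithm of a ratio and reduce the whole claim to one clean inequality. Expanding $H_\infty(X) = -\log\max_{x}p(x)$ and the definition of $H_\infty(X\mid Y)$ in~(\ref{eqn:SmithCondEntropyInfty}), I get
\[
I_\infty(X;Y) \;=\; H_\infty(X) - H_\infty(X\mid Y) \;=\; \log\frac{\sum_{y\in{\cal Y}} p(y)\,\max_{x\in{\cal X}} p(x\mid y)}{\max_{x\in{\cal X}} p(x)}.
\]
Hence it suffices to show $\sum_{y} p(y)\max_{x} p(x\mid y)\le e^{2\delta}\max_{x} p(x)$, because then $I_\infty(X;Y)\le\log e^{2\delta}=2\delta\log e$. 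The first simplification I would use is the Bayes identity $p(y)\,p(x\mid y)=p_{X\wedge Y}(x,y)=p(x)\,p(y\mid x)$; since $p(y)$ does not depend on $x$ it can be pushed inside the maximization, giving $p(y)\max_{x}p(x\mid y)=\max_{x}\big[p(x)\,p(y\mid x)\big]$. So the target becomes $\sum_{y}\max_{x}\big[p(x)\,p(y\mid x)\big]\le e^{2\delta}\max_{x}p(x)$.

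The key ingredient is the row-ratio bound extracted from Lemma~\ref{differ}: for all inputs $x,x'\in{\cal X}$ and all $y\in{\cal Y}$ we have $p(y\mid x)\le e^{2\delta}\,p(y\mid x')$. This holds because each input value is the true answer $f(D)$ of a database obtained by fixing the common (public) part and varying the single secret entry, so any two inputs correspond to databases of equal size differing in exactly one entry, to which Lemma~\ref{differ} applies. Granting this, I fix $x^{\ast\ast}$ achieving $\max_{x}p(x)$ and bound, for each $y$,
\[
\max_{x}\big[p(x)\,p(y\mid x)\big] \;\le\; \Big(\max_{x}p(x)\Big)\Big(\max_{x}p(y\mid x)\Big) \;\le\; \Big(\max_{x}p(x)\Big)\, e^{2\delta}\,p(y\mid x^{\ast\ast}),
\]
where the first inequality is the elementary fact that a product of maxima dominates the max of the product (all factors being nonnegative) and the second is the row-ratio bound applied to the row realizing $\max_{x}p(y\mid x)$. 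Summing over $y$ and using $\sum_{y}p(y\mid x^{\ast\ast})=1$ yields $\sum_{y}\max_{x}[p(x)p(y\mid x)]\le e^{2\delta}\max_{x}p(x)$, which is exactly what is needed.

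The main obstacle is entirely the row-ratio step; once $p(y\mid x)\le e^{2\delta}p(y\mid x')$ is available for \emph{every} pair of inputs, the remaining computation is a two-line summation. The only delicate point is therefore justifying that the ``differ in one entry'' hypothesis of Lemma~\ref{differ} covers all pairs of channel inputs, rather than just neighbors in some graph; this is where the modeling of the secret as a single varying entry over a shared public database is used. It is worth noting that, unlike the Shannon bound of Theorem~\ref{theo:Shannon}, the resulting bound $2\delta\log e$ is uniform and carries no dependence on a reference row $x^{\ast}$, which reflects the fact that min-entropy only tracks the single most likely input and so the same factor $e^{2\delta}$ controls both numerator and denominator of the ratio.
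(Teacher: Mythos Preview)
Your proof is correct and follows essentially the same route as the paper's: rewrite $p(y)\max_x p(x\mid y)$ as $\max_x p(x)p(y\mid x)$ via Bayes, apply the $e^{2\delta}$ row-ratio bound from Lemma~\ref{differ} to replace $p(y\mid x)$ by $e^{2\delta}p(y\mid x')$ for a fixed reference row, pull out $\max_x p(x)$, and sum the reference row to $1$. The only cosmetic differences are that the paper uses an arbitrary reference $x^\ast$ while you pick the maximizer $x^{\ast\ast}$ of $p(x)$ (either works, since only $\sum_y p(y\mid\cdot)=1$ is used), and that you interpose the extra inequality $\max_x[p(x)p(y\mid x)]\le(\max_x p(x))(\max_x p(y\mid x))$, which the paper avoids by applying the DP bound directly inside the $\max$; your detour is harmless but unnecessary.
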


The converse of Theorem~\ref{theo:Renyi} does not hold, not even if we consider capacity instead than mutual information. 
It is easy to prove, in fact, that Examples~\ref{exa:sensitiveness} and \ref{exa:beta} lead to counterexamples also in the case of R\'enyi min mutual information and capacity. 

\section{Conclusion and future work}

In this paper we have shown that the problem of privacy in statistical databases 
can be formulated in information-theoretic terms, in a way analogous to what has 
been done for information flow and anonymity: the database query system can be seen as a noisy channel, in the information-theoretic sense. 
Then we have considered Dwork's notion of differential privacy, and we have shown that it is strictly stronger than requiring the channel to have low capacity, 
both for the cases of Shannon and R\'enyi min entropy. 
It is natural to consider, then, whether a weaker notion would give enough privacy guarrantees. 
As future work, we intend to investigate this question. 

We first need to understand, of course, what are the constraints that could be relaxed in the notion of differential privacy. 
To this aim, Example~\ref{exa:beta} is quite interesting:  whenever we get an answer $y$, there are $n-1$ possible inputs (entries) which are equally likely to have 
generated that answer, and one input $x$ that is much less likely than the others ($p(x|y) = \alpha$, where $\alpha$ is a very small value). 
The existence of the latter seems quite harmless, yet it is exactly that entry that causes  differential privacy to fail (in the sense that its parameter is unbound). 
The notion of R\'enyi min capacity seems a plausible candidate for the notion of privacy: it's relation with the Bayes risk ensures that a bound $C_\infty$ can be seen as a bound on the probability of  guessing 
the right value of $x$ (given the obsevable). In some scenario, this may be exactly what we want. 

\subsection*{Acknowledgement}
We wish to thank Daniel Le M\'etayer for having pointed out to us the notion of differential privacy, and brought to our attention  the possible relation with quantitative information flow. 

\bibliographystyle{plain}
\bibliography{short}

\newpage
\section*{Appendix}

\begin{theorem} [Theorem~\ref{theo:equivalent} in the paper]
	A function $\mathcal{K}$ gives $\epsilon$-differential privacy iff it gives $\delta$-differential privacy, with $\epsilon = \delta$.
\end{theorem}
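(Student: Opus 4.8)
The plan is to prove the two implications separately, exploiting the fact that Definitions~\ref{def:diff-privacy-1} and~\ref{def:diff-privacy-2} differ only in whether the tested event is an arbitrary subset $S \subseteq Range(\mathcal{K})$ or a single outcome $k \in Range(\mathcal{K})$. In both directions the parameter will carry over unchanged, so that $\epsilon = \delta$.

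First I would handle the direction from Definition~\ref{def:diff-privacy-1} to Definition~\ref{def:diff-privacy-2}. This is immediate by specialization: given a pair of adjacent databases $D'$, $D''$ and a single element $k \in Range(\mathcal{K})$, I instantiate the set-based inequality with the singleton $S = \{k\}$. Since $Pr[\mathcal{K}(D) \in \{k\}] = Pr[\mathcal{K}(D) = k]$, the hypothesis $Pr[\mathcal{K}(D') \in \{k\}] \leq e^{\epsilon} \, Pr[\mathcal{K}(D'') \in \{k\}]$ becomes exactly the single-element bound of Definition~\ref{def:diff-privacy-2} with $\delta = \epsilon$.

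For the converse, from Definition~\ref{def:diff-privacy-2} to Definition~\ref{def:diff-privacy-1}, I would decompose an arbitrary $S$ into its elements and sum. Using that the outcomes are mutually exclusive, $Pr[\mathcal{K}(D') \in S] = \sum_{k \in S} Pr[\mathcal{K}(D') = k]$; applying the single-element hypothesis termwise gives $\sum_{k \in S} Pr[\mathcal{K}(D') = k] \leq \sum_{k \in S} e^{\delta} \, Pr[\mathcal{K}(D'') = k]$; and factoring $e^{\delta}$ out of the sum and re-assembling the events yields $e^{\delta} \, Pr[\mathcal{K}(D'') \in S]$, which is the set-based bound of Definition~\ref{def:diff-privacy-1} with $\epsilon = \delta$. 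Combining the two implications establishes the claimed equivalence.

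The only delicate point is the summation step in the converse: it relies on $Range(\mathcal{K})$ being discrete, so that $Pr[\mathcal{K}(D') \in S]$ genuinely equals $\sum_{k \in S} Pr[\mathcal{K}(D') = k]$ and $e^{\delta}$ can be pulled through the (possibly countably infinite) series by linearity. This is precisely the setting of the paper, where $\mathcal{K}$ induces a channel matrix over a finite set of reported answers, so no measure-theoretic machinery is required; in a continuous setting one would replace the sum by an integral over $S$ and the same argument would go through. I expect this to be the only place where any care is needed, since the forward direction is a one-line specialization to singletons.
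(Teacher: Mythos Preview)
Your proof is correct and matches the paper's own argument essentially line for line: the forward direction by specializing $S$ to a singleton $\{k\}$, and the converse by writing $Pr[\mathcal{K}(D')\in S]$ as a sum over $k\in S$, bounding each term, and factoring out $e^{\delta}$. Your added remark about discreteness being needed for the summation step is a valid clarification that the paper leaves implicit.
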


\begin{proof}
	
	\ \\
	
	$\Rightarrow$ Let $k \in Range(\mathcal{K})$. Then for all pair of adjacent databases $D', D''$ we have
	
	$$
	\begin{array}{lclll}
		\displaystyle Pr[\mathcal{K}(D') = k] & =    & \displaystyle Pr[\mathcal{K}(D') \in \{k\}] & \quad& \mbox{(taking $S$ to be a singleton set)} \\
		                                      & \leq & \displaystyle {e^{\epsilon}} Pr[\mathcal{K}(D'') \in \{k\}] && \mbox{(by Definition~\ref{def:diff-privacy-1})} \\
		                                      & =    & \displaystyle {e^{\epsilon}} Pr[\mathcal{K}(D'') = k] && \mbox{} \\
	\end{array}
	$$
	
	$\Leftarrow$ Let $S \subseteq Range(\mathcal{K})$
	
	$$
	\begin{array}{lclll}
		\displaystyle Pr[\mathcal{K}(D') \in S] & =    & \displaystyle \sum_{k \in S} Pr[\mathcal{K}(D') = k] &\quad& \mbox{(by union of elements)} \\
		                                        & \leq & \displaystyle \sum_{k \in S} e^{\delta} Pr[\mathcal{K}(D'') = k] &\quad& \mbox{(by Definition~\ref{def:diff-privacy-2})} \\
		                                        & =    & \displaystyle e^{\delta} \sum_{k \in S} Pr[\mathcal{K}(D'') = k] &\quad& \mbox{(by distributivity)} \\
		                                        & =    & \displaystyle e^{\delta} Pr[\mathcal{K}(D'') \in S] &\quad& \mbox{(by union of elements)} \\
	\end{array}
	$$
		
\end{proof}

\begin{lemma}[Lemma ~\ref{differ} in the paper]
	Let $\mathcal{K}$ be a function that gives $\delta$-differential privacy for all pairs of adjacent databases. Given two databases $D'$ and $D''$ that have the same number of entries and differ in the value of at most one entry, then:
	$$ Pr[\mathcal{K}(D') = k] \leq e^{2\delta} \times Pr[\mathcal{K}(D'') = k] $$
\end{lemma}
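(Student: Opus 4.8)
The plan is to reduce the comparison between two equal-size databases that differ in one entry to two applications of $\delta$-differential privacy for genuinely adjacent databases. Following the decomposition already fixed in Figure~\ref{fig:databases}, I would write $D' = D \cup \{r'\}$ and $D'' = D \cup \{r''\}$, where $D$ is the common part shared by both. The crucial observation is that the smaller database $D$ is adjacent to \emph{both} $D'$ and $D''$: in each case $D$ is a proper subset of the larger database, and the larger one contains exactly one additional entry ($r'$ or $r''$, respectively). Hence Definition~\ref{def:diff-privacy-2} applies directly to the pair $(D', D)$ and to the pair $(D, D'')$, even though it does \emph{not} apply directly to $(D', D'')$, which have the same number of entries and are therefore not adjacent.

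Next, I would fix an arbitrary $k \in Range(\mathcal{K})$ and chain the two resulting inequalities. From the adjacency of $D'$ and $D$ we obtain $Pr[\mathcal{K}(D') = k] \leq e^{\delta}\, Pr[\mathcal{K}(D) = k]$, and from the adjacency of $D$ and $D''$ we obtain $Pr[\mathcal{K}(D) = k] \leq e^{\delta}\, Pr[\mathcal{K}(D'') = k]$. Substituting the second bound into the first gives $Pr[\mathcal{K}(D') = k] \leq e^{2\delta}\, Pr[\mathcal{K}(D'') = k]$, which is precisely the claimed bound; since $k$ was arbitrary, the inequality holds for all $k \in Range(\mathcal{K})$.

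The only subtlety worth flagging is that each step requires differential privacy in a specific direction, and both directions are available because the adjacency relation is symmetric (the defining condition is on an \emph{unordered} pair), so Definition~\ref{def:diff-privacy-2} may be instantiated with the two databases in whichever order is convenient. There is no genuine analytic obstacle here: the entire content of the lemma lies in the bookkeeping idea of passing through the common sub-database $D$ as an intermediate point, which converts a ``differ-in-one-value'' comparison (not covered by the definition) into two ``differ-by-one-entry'' comparisons (each directly covered), at the cost of squaring the multiplicative factor, i.e.\ doubling the exponent from $\delta$ to $2\delta$.
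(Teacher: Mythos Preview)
Your proposal is correct and follows exactly the same route as the paper's own proof: decompose $D' = D \cup \{r'\}$ and $D'' = D \cup \{r''\}$, then apply Definition~\ref{def:diff-privacy-2} once to the adjacent pair $(D',D)$ and once to the adjacent pair $(D,D'')$, chaining the two $e^\delta$ factors into $e^{2\delta}$. Your added remark about the symmetry of the adjacency relation (so that the definition may be applied in either direction) is a reasonable clarification, but otherwise there is no difference in strategy or content.
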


\begin{proof}
	Let us call $D$ the common part that $D'$ and $D''$ share, and let us call $r'$ and $r''$ the entries in which they differ, in such a way that $D'= D \cup \{r'\}$ and 
	$$
\renewcommand{\arraystretch}{1.5}
	\begin{array}{lcll}
		\displaystyle Pr[\mathcal{K}(D \cup \{r'\}) = k] & \leq &  \displaystyle e^{\delta} \times Pr[\mathcal{K}(D) = k] & \quad\mbox{(by Definition~\ref{def:diff-privacy-2})} \\
		                                                 & \leq &  \displaystyle e^{\delta} \times e^{\delta} \times Pr[\mathcal{K}(D \cup \{r''\}) = k] & \quad\mbox{(by Definition~\ref{def:diff-privacy-2})} \\
		                                                 & \leq &  \displaystyle e^{2\delta} \times Pr[\mathcal{K}(D'') = k] & \mbox{} \\
	\end{array}
	$$
\end{proof}

\begin{theorem}[Theorem \ref{theo:Shannon} in the paper]
	If a randomized function $\mathcal{K}$ gives $\delta$-differential privacy according to Definition~\ref{def:diff-privacy-2}, then for every result $x^*$ of the function $f$ the Shannon mutual information between the true answers $X$ (i.e. the results of $f$) and the reported answers $Y$  (i.e. the results of $\mathcal{K}$) is bounded by:
	$$I(X;Y) \leq (e^{2\delta} + e^{-2\delta})\delta \log(e) + (e^{2\delta} - e^{-2\delta})\sum_{y}p(y|x^*) \log(p(y|x^*))$$
\end{theorem}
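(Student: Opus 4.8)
The plan is to convert the differential-privacy hypothesis into a pointwise multiplicative control of the channel rows relative to a fixed anchor row $x^*$, and then to feed this control into the entropy expansion of $I(X;Y)$.

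First I would derive the pointwise bounds. Since $\mathcal{K}$ gives $\delta$-differential privacy, Lemma~\ref{differ} (together with its symmetric counterpart, obtained by exchanging the two databases) yields, for the anchor answer $x^*$, every other answer $x$ in the range of $f$, and every reported value $y$,
$$e^{-2\delta}\,p(y\mid x^*)\ \le\ p(y\mid x)\ \le\ e^{2\delta}\,p(y\mid x^*).$$
Because the output marginal $p(y)=\sum_x p(x)\,p(y\mid x)$ is a convex combination of the rows, the same two-sided estimate propagates to it:
$$e^{-2\delta}\,p(y\mid x^*)\ \le\ p(y)\ \le\ e^{2\delta}\,p(y\mid x^*).$$
This is the only point at which differential privacy (via Definition~\ref{def:diff-privacy-2} and Lemma~\ref{differ}) enters; from here on everything is an estimate on the fixed matrix anchored at $x^*$.

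Next I would expand $I(X;Y)=H(Y)-H(Y\mid X)=-\sum_y p(y)\log p(y)+\sum_{x,y}p(x)\,p(y\mid x)\log p(y\mid x)$ and bound the two pieces against the anchor. Writing every occurrence of $p(y)$ and of $p(y\mid x)$ as $p(y\mid x^*)$ times a factor in $[e^{-2\delta},e^{2\delta}]$, each logarithm splits as $\pm 2\delta\log e$ plus $\log p(y\mid x^*)$, using $\log e^{\pm 2\delta}=\pm 2\delta\log e$. The $\pm 2\delta\log e$ contributions accumulate into the additive term governed by $\delta\log e$, while the $\log p(y\mid x^*)$ contributions accumulate into the term carrying $\sum_y p(y\mid x^*)\log p(y\mid x^*)$. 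The crucial subtlety is that $\log p(y\mid x^*)\le 0$, so to keep each estimate pointing the right way one must apply the upper extreme $e^{2\delta}$ to some factors and the lower extreme $e^{-2\delta}$ to others; it is exactly this asymmetric assignment of which extreme to use on each factor that is responsible for the coefficients $e^{2\delta}+e^{-2\delta}$ and $e^{2\delta}-e^{-2\delta}$, rather than a single crude constant.

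The step I expect to be the main obstacle is precisely this bookkeeping: a symmetric, termwise use of the two extremes bounds $I$ by $4\delta\log e$ plus a multiple of $\sum_y p(y\mid x^*)\log p(y\mid x^*)$ with the wrong constant, so sharpening the leading factor to $e^{2\delta}+e^{-2\delta}$ requires a finer argument. To obtain it I would isolate $\sum_{x,y}p(x)\,p(y\mid x)\log\frac{p(y\mid x)}{p(y\mid x^*)}$ and estimate it through the convexity of $t\mapsto t\log t$ on $[e^{-2\delta},e^{2\delta}]$ (a chord, i.e.\ Jensen-type, bound), exploiting that the clean endpoint values $e^{\pm 2\delta}\cdot(\pm 2\delta\log e)$ are what generate the hyperbolic combinations $e^{2\delta}\pm e^{-2\delta}$; only afterwards would I reinsert the marginal estimate $p(y)\ge e^{-2\delta}p(y\mid x^*)$. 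Finally I would confirm the asymptotics promised after the statement: as $\delta\to 0$ the factor $e^{2\delta}-e^{-2\delta}\to 0$ annihilates the entropy term while $(e^{2\delta}+e^{-2\delta})\delta\log e\to 0$, so the whole bound tends to $0$.
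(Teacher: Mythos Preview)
Your plan follows essentially the same route as the paper: decompose $I(X;Y)=H(Y)-H(Y\mid X)$, anchor every row at $x^*$ via the two-sided bounds $e^{-2\delta}p(y\mid x^*)\le p(y\mid x)\le e^{2\delta}p(y\mid x^*)$ coming from Lemma~\ref{differ}, and then split the resulting logarithms using $\log e^{\pm 2\delta}=\pm 2\delta\log e$. The paper's execution is, however, much more direct than you anticipate: it simply substitutes $e^{-2\delta}p(y\mid x^*)$ for $p(y\mid x)$ (both in the coefficient and inside the logarithm) throughout the $H(Y)$ computation, and $e^{2\delta}p(y\mid x^*)$ throughout the $H(Y\mid X)$ computation, then simplifies. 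The constants $e^{2\delta}+e^{-2\delta}$ and $e^{2\delta}-e^{-2\delta}$ drop out of that direct substitution, with no convexity or Jensen step at all.

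So the extra machinery you propose---isolating $\sum_{x,y}p(x)p(y\mid x)\log\frac{p(y\mid x)}{p(y\mid x^*)}$ and applying a chord bound for $t\log t$---is absent from the paper; you are anticipating more difficulty than the paper's argument actually engages with. Your instinct that the monotonicity of $t\mapsto -t\log t$ deserves care is reasonable (the paper's one-shot substitution tacitly treats it as monotone), but if the goal is to reproduce the paper's proof, the direct replacement is all that is done.
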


\begin{proof}
	Let us calculate the Shannon mutual information using the formula $I(X;Y) = H(Y) - X(Y|X)$.
	
	\begin{equation}		
		\label{eq:shannon-entropy}
		\begin{array}{lcll}
			\displaystyle H(Y) & =    & \displaystyle -\sum_{y}{p(y)\log p(y)} & \mbox{(by definition)} \\
			     							 & =    & \displaystyle -\sum_{y} \left( \sum_{x}p(x,y) \right) \log \left( \sum_{x}p(x,y) \right) & \mbox{(by probability laws)} \\
			                   & =    & \displaystyle -\sum_{y} \left( \sum_{x}p(x)p(y|x) \right) \log \left( \sum_{x}p(x)p(y|x) \right) & \mbox{(by probability laws)} \\
			                   & \leq & \displaystyle -\sum_{y} \left( \sum_{x}p(x)e^{-2\delta}p(y|x^*) \right) \log \left( \sum_{x}p(x)e^{-2\delta}p(y|x^*) \right) & \mbox{(by Definition~\ref{def:diff-privacy-2} and Lemma~\ref{differ})} \\
			                   & =    & \displaystyle -\sum_{y} e^{-2\delta}p(y|x^*) \left( \sum_{x}p(x) \right) \log \left( e^{-2\delta}p(y|x^*) \sum_{x}p(x) \right) & \mbox{} \\
			                   & =    & \displaystyle -\sum_{y} e^{-2\delta}p(y|x^*) \log (e^{-2\delta}p(y|x^*)) & \mbox{(by probability laws)} \\
			                   & =    & \displaystyle -\sum_{y} \left( e^{-2\delta}p(y|x^*)\log e^{-2\delta} \right) -\sum_{y} \left( e^{-2\delta} p(y|x^*) \log p(y|x^*)  \right) & \mbox{(by distributivity)} \\
	              		     & =    & \displaystyle  -e^{-2\delta}\log e^{-2\delta} \left( \sum_{y} p(y|x^*) \right) -\sum_{y} \left( e^{-2\delta} p(y|x^*) \log p(y|x^*)  \right) & \mbox{} \\
			                   & =    & \displaystyle  \delta e^{-2\delta} \log{e} -e^{-2\delta} \sum_{y} p(y|x^*) \log p(y|x^*) & \mbox{(by probability laws)} \\
		\end{array}
	\end{equation}

	\begin{equation}		
		\label{eq:shannon-cond-entropy}
		\begin{array}{lcll}
			\displaystyle H(Y|X) & =    & \displaystyle -\sum_{x} p(x) \sum_{y} p(y|x) \log p(y|x) & \mbox{(by definition)} \\	       
			                     & \geq & \displaystyle -\sum_{x} p(x) \sum_{y} e^{2\delta} p(y|x^*) \log(e^{2\delta} p(y|x^*)) & \mbox{(by Definition~\ref{def:diff-privacy-2} and Lemma ~\ref{differ})} \\	       
			                     & =    & \displaystyle -\left( \sum_{y} e^{2\delta} p(y|x^*) \log(e^{2\delta}p(y|x^*))  \right) \sum_{x}p(x)  & \mbox{(by distributivity)} \\	       
			                     & =    & \displaystyle - \sum_{y} e^{2\delta} p(y|x^*) \log(e^{2\delta}p(y|x^*)) & \mbox{(by probability laws)} \\	 
			                     & =    & \displaystyle - \sum_{y} \left( e^{2\delta} p(y|x^*) \log(e^{2\delta}) \right) -\sum_{y} \left( e^{2\delta} p(y|x^*) \log p(y|x^*) \right) & \mbox{} \\	 
			                     & =    & \displaystyle -e^{\delta} \log e^{2\delta} \left( \sum_{y}p(y|x^*) \right) -e^{2\delta}\sum_{y}p(y|x^*) \log p(y|x^*) & \mbox{} \\	 
			                     & =    & \displaystyle -\delta e^{2\delta} \log e - e^{2\delta}\sum_{y}p(y|x^*)\log p(y|x^*) & \mbox{(by probability laws)} \\	
		\end{array}
	\end{equation}
	
	\[	\begin{array}{lcll}
			\displaystyle I(X;Y) & =    & \displaystyle H(Y) - X(Y|X) & \mbox{(by definition)} \\
			                     & \leq & \displaystyle \delta e^{-2\delta} \log{e} -e^{-2\delta} \sum_{x} p(y|x^*) \log p(y|x^*) + \\
			                     &      & \displaystyle 2\delta e^{2\delta} \log e + e^{2\delta}\sum_{y}p(y|x^*)\log p(y|x^*) & \mbox{(by Equations~\ref{eq:shannon-entropy} and ~\ref{eq:shannon-cond-entropy})} \\
			                     & =    & \displaystyle (e^{2\delta} + e^{-2\delta})\delta \log(e) + (e^{2\delta} - e^{-2\delta})\sum_{y}p(y|x^*) \log(p(y|x^*)) & \mbox{(by distributivity)} \\
		\end{array}
		\]
	
\end{proof}

\begin{theorem}[Theorem~\ref{theo:Renyi} in the paper]
	If a randomized function $\mathcal{K}$ gives $\delta$-differential privacy according to Definition~\ref{def:diff-privacy-2}, then the R\'enyi min mutual information between the true answer of the function $X$ and the reported answer $Y$ is bounded by:
	$$I_{\infty}(X;Y) \leq 2\delta \log e.$$
\end{theorem}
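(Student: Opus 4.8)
The plan is to work straight from the definitions and reduce the claim $I_\infty(X;Y) \le 2\delta \log e$ to the single inequality $\sum_y p(y)\max_x p(x\mid y) \le e^{2\delta}\max_x p(x)$. Writing $I_\infty(X;Y) = H_\infty(X) - H_\infty(X\mid Y)$ and expanding $H_\infty(X) = -\log\max_x p(x)$ together with the definition of $H_\infty(X\mid Y)$ in (\ref{eqn:SmithCondEntropyInfty}), the two logarithms collapse into a single ratio, so that $I_\infty(X;Y) = \log\frac{\sum_y p(y)\max_x p(x\mid y)}{\max_x p(x)}$. Since $\log e^{2\delta} = 2\delta\log e$, establishing the isolated inequality is exactly what is required.

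The key idea, and the step I expect to be the crux, is to rewrite the a posteriori term by Bayes' rule, converting a quantity about the posterior $p(x\mid y)$ (which differential privacy does not control directly) into one about the channel columns $p(y\mid x)$ (which it does). Using $p(y)\,p(x\mid y) = p(x)\,p(y\mid x)$ I would rewrite $\sum_y p(y)\max_x p(x\mid y) = \sum_y \max_x\bigl(p(x)\,p(y\mid x)\bigr)$. Fixing an arbitrary reference answer $x^*$, the $\delta$-differential privacy hypothesis of Definition~\ref{def:diff-privacy-2} together with Lemma~\ref{differ} supplies the column bound $p(y\mid x) \le e^{2\delta}\,p(y\mid x^*)$ for every true answer $x$ and every $y$, exactly as in the Shannon proof. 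The only genuinely delicate point is this translation of the differential-privacy hypothesis, a statement about neighboring databases, into the uniform column bound on the channel matrix; once that is secured, the min-entropy case is actually cleaner than the Shannon case because the logarithm sits outside the sum.

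From there the argument is routine. For each $y$ let $x_y^*$ attain the inner maximum; then $\max_x\bigl(p(x)\,p(y\mid x)\bigr) = p(x_y^*)\,p(y\mid x_y^*) \le \bigl(\max_x p(x)\bigr)\,e^{2\delta}\,p(y\mid x^*)$, where I bound the varying factor $p(x_y^*)$ by $\max_x p(x)$ and the column entry by the reference column. Summing over $y$ and using $\sum_y p(y\mid x^*) = 1$ gives $\sum_y p(y)\max_x p(x\mid y) \le e^{2\delta}\,\max_x p(x)$, precisely the inequality isolated in the first step. Dividing by $\max_x p(x)$ and taking logarithms yields $I_\infty(X;Y) \le 2\delta\log e$, completing the proof.
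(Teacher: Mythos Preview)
Your proof is correct and follows essentially the same approach as the paper: both rewrite the a posteriori term via Bayes' rule as $\sum_y \max_x p(x)p(y\mid x)$, apply the column bound $p(y\mid x)\le e^{2\delta}p(y\mid x^*)$ from Definition~\ref{def:diff-privacy-2} and Lemma~\ref{differ}, pull out $\max_x p(x)$, and use $\sum_y p(y\mid x^*)=1$. The only cosmetic difference is that you first isolate the target inequality $\sum_y p(y)\max_x p(x\mid y)\le e^{2\delta}\max_x p(x)$ and then prove it, whereas the paper runs the same chain of inequalities directly on $H_\infty(X\mid Y)$.
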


\begin{proof}

	Let us calculate the R\'enyi mutual information using the formula $I_\infty(X;Y) = H_\infty(X) - X_\infty(X|Y)$.

	\begin{equation}		
		\label{eq:renyi-entropy}
		\begin{array}{lcll}
			\displaystyle H_{\infty}(X) & = & \displaystyle -\log \max_{x} p(x) &\quad \mbox{(by definition)} \\
		\end{array}
	\end{equation}	
	\begin{equation}		
		\label{eq:renyi-cond-entropy}
		\begin{array}{lcll}
			\displaystyle H_{\infty}(X|Y) & =    & \displaystyle -\log \sum_{y} p(y) \max_{x} p(x|y) &\quad \mbox{(by definition)} \\
			                              & =    & \displaystyle -\log \sum_{y} \max_{x} p(y)p(x|y) &\quad \mbox{} \\
			                              & =    & \displaystyle -\log \sum_{y} \max_{x} p(x)p(y|x) &\quad \mbox{(by probability laws)} \\
			                              & \geq & \displaystyle -\log \sum_{y} \max_{x} p(x)e^{2\delta}p(y|x^*) &\quad \mbox{(by Definition~\ref{def:diff-privacy-2} and Lemma~\ref{differ})} \\
			                              & =    & \displaystyle -\log \sum_{y} e^{2\delta} p(y|x^*) \max_{x} p(x) &\quad \mbox{} \\
			                              & =    & \displaystyle -\log \left( e^{2\delta} \max_{x} p(x) \sum_{y} p(y|x^*)  \right) &\quad \mbox{} \\
			                              & =    & \displaystyle -\log \left( e^{2\delta} \max_{x} p(x) \right) &\quad \mbox{(by probability laws)} \\
			                              & =    & \displaystyle -2\delta \log e - \log \max_{x} p(x)   & \quad\mbox{} \\
		\end{array}
	\end{equation}
\renewcommand{\arraystretch}{1.5}	
	\[
		\begin{array}{lcll}
			\displaystyle I_{\infty}(X;Y) & =    & \displaystyle H_{\infty}(X) - H_{\infty}(X|Y) & \mbox{(by definition)} \\
			                              & \leq & \displaystyle - \log \max_{x} p(x) + 2 \delta \log e + \log \max_{x} p(x) & \mbox{(by Equations~\ref{eq:renyi-entropy} and ~\ref{eq:renyi-cond-entropy})} \\
			                              & =    & \displaystyle 2\delta \log e & \mbox{} \\	
		\end{array}
	\]
\end{proof}

\end{document}